\newtheorem{lemma}{Lemma}
\newtheorem{theorem}{Theorem}
\newenvironment{proof}
{{\noindent\bf Proof}}{$\Box$}
\title{Some New Equiprojective Polyhedra\thanks{This paper combines two conference papers~\cite{HHNL08}
and~\cite{RQH06}.}}
\author{Masud Hasan$^{1,2}$, Mohammad Monoar Hossain$^1$, Alejandro L\'opez-Ortiz$^2$, Sabrina Nusrat$^1$\\
Saad Altaful Quader$^1$, and Nabila Rahman$^1$\\
\\
$^1$ Department of Computer Science and Engineering\\
Bangladesh University of Engineering and Technology, Dhaka-1000, Bangladesh\\
masudhasan@cse.buet.ac.bd, monowar3306@yahoo.com, nusratsabrina@yahoo.com\\
saad0105050@yahoo.com, nabilar@gmail.com\\
\\
$^2$ Cheriton School of Computer Science\\
University of Waterloo, Waterloo, Ontario N2L 3G1, Canada\\
\{alopez-o, m2hasan\}@uwaterloo.ca}
\date{}
\begin{document}
\maketitle

\begin{abstract}
A convex polyhedron $P$ is $k$-equiprojective if all of its orthogonal projections,
i.e., shadows, except those parallel to the faces of $P$ are $k$-gon for 
some fixed value of $k$.
Since 1968 it is an open problem to construct all equiprojective polyhedra.
Recently, Hasan and Lubiw~\cite{HL08} have given a characterization of equiprojective polyhedra.
Based on their characterization, in this paper we discover some new equiprojective polyhedra
by cutting and gluing existing polyhedra.\\
\\
{\bf Keywords:} Algorithm, equiprojective polyhedra, orthogonal projection, prism, zonohedra.
\end{abstract}

\section{Introduction}
A (3D) \emph{convex polyhedron} is the region bounded by the 
intersection of a finite number of half-spaces. 
A convex polyhedron $P$ is {\it $k$-equiprojective\/} if 
its orthogonal projection, i.e., shadow, 
is a $k$-gon in every direction except directions parallel to faces of $P$.
A cube is 6-equiprojective, a triangular prism is 5-equiprojective
(in fact, any $p$-gonal prism is $p+2$-equiprojective), and
a tetrahedron is not equiprojective. 
See Figure~\ref{EQUIPROJECTIVE_FIGURE}(a).

\begin{figure}[htbp]
\begin{center}
\input{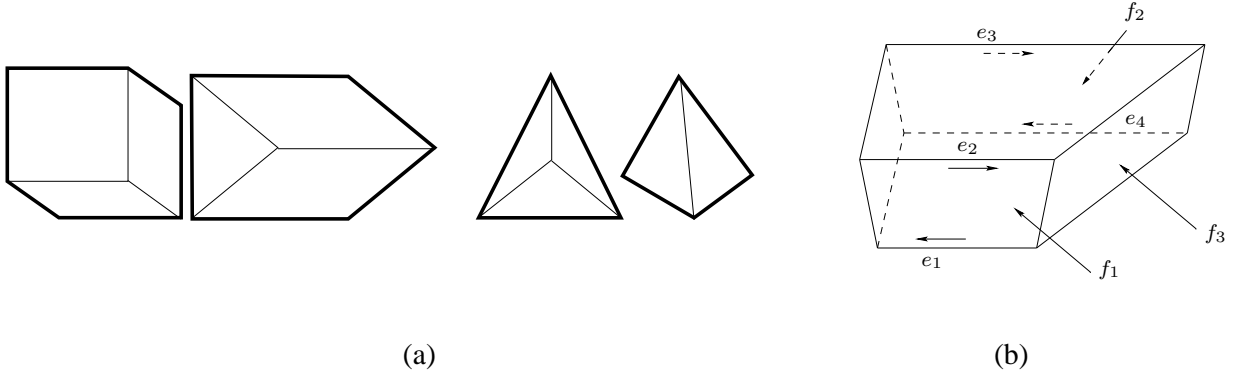}
\caption{
(a) A cube is 6-equiprojective, a triangular prism is 5-equiprojective, and a tetrahedron is not equiprojective.
(b) Examples of some compensating edge-face duples:
$(e_1,f_1)$ is compensated by $(e_2,f_1)$ and by $(e_4,f_2)$ but not by $(e_3,f_2)$.
$f_3$ is a self-compensating face.}
\label{EQUIPROJECTIVE_FIGURE}
\end{center}
\end{figure}

In 1968, Shepherd~\cite{She68,CFG91} defined equiprojective polyhedra, 
gave the examples above, and asked for a method to construct all equiprojective polyhedra.  
Recently, Hasan and Lubiw~\cite{HL08} have given a characterization 
and linear time recognition algorithm for equiprojective polyhedra. 
Using that characterization, 
in this paper, we prove that there is no 3- or 4-equiprojective polyhedron 
and triangular prism is the only 5-equiprojective polyhedron
(and thus the minimum equiprojective polyhedron) (Section~\ref{se:min}).
Then we give some new equiprojective polyhedra by applying 
some cutting and glueing techniques on existing convex polyhedra (Section~\ref{se:new}.

\section{Preliminaries}

Consider a convex polyhedron $P$.
For edge $e$ in face $f$ of $P$, we call $(e,f)$ an {\it edge-face duple\/}.
Two edge-face duples $(e,f)$ and $(e',f')$ are {\it parallel\/}
if $e$ is parallel to $e'$ and $f$ is parallel or equal to $f'$.
Define the {\it direction\/} of duple $(e,f)$ to be
a unit vector in the direction of edge $e$ as encountered 
in clockwise traversal of the outside of face $f$.

Two edge-face duples $(e,f)$ and $(e',f')$ {\it compensate\/} each other if
they are parallel and 
their directions are opposite (i.e., one is the negation of the other).
In particular, this means that either $f = f'$ and $e$ and $e'$ are
parallel edges lying at opposite ends of $f$,
or $f$ and $f'$ are distinct parallel faces and $e$ and $e'$ are
parallel edges lying at the ``same side'' of $f$ and $f'$,
where by ``same side'' it means that a plane through $e$ and $e'$
will have $f$ and $f'$ in the same half-space. 
An edge-face duple has at most two compensating duples.

A face $f$ is called \emph{self-compensating} when its edge-face duples
are compensated within themselves, e.g., the edges of $f$ are in parallel pairs.
See Figure~\ref{EQUIPROJECTIVE_FIGURE}(b).

\emph{Zonohedra} are the convex polyhedra where every face 
consists of parallel pairs of edges \cite{Tay92}.
(``Zonohedra'' has been defined by several mathematicians 
and there are some confusions in their definitions, 
see \cite{Tay92} for the history).
Zonohedra have the property that every face has a parallel 
face with corresponding edges parallel---each edge of a face has a 
parallel edge in the adjacent face, which in turn has another parallel
edge in the other face adjacent to the later, and so on; 
So a pair of parallel edges in a face 
will subsequently give two parallel edges in a parallel face.
For more information on zonohedra, see the web pages \cite{Epp,Epp96,Har}.

A \emph{$k$-gonal prism} consists of two similar copies of $k$-gons with 
each pair of corresponding edges connected by a parallelogram.


The characterization of equiprojective polyhedra given in \cite{HL03} 
is based on partitioning the edge-face duples into compensating pairs.
Any edge $e$ of the shadow of $P$ corresponds to some edge of $P$.
As the projection direction changes, $e$ may leave the shadow 
boundary.  This only happens when a face $f$ containing $e$ 
in $P$ becomes parallel to the projection direction.  
In order to preserve the size of the shadow, some other edge
$e'$ must join the shadow boundary.  
For these events to occur simultaneously, $e'$ must
be an edge of $f$, or of a face parallel to $f$.
This gives some idea that the condition for equiprojectivity
involves a pairing-up of parallel edge-face duples of $P$.

\begin{theorem}{\rm \cite{HL03}}
\label{th:char}
Polyhedron $P$ is equiprojective iff its set of edge-face 
duples can be partitioned into compensating pairs.
\end{theorem}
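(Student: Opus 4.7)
The plan is to analyze how the silhouette of $P$ evolves as the projection direction $\mathbf{d}$ varies continuously over the unit sphere. An edge $e$ shared by faces $f,f'$ lies on the silhouette, and therefore contributes an edge to the shadow, precisely when $\mathbf{d}\cdot\mathbf{n}_f$ and $\mathbf{d}\cdot\mathbf{n}_{f'}$ have opposite signs. Hence the silhouette changes combinatorially exactly when $\mathbf{d}$ crosses a \emph{critical} direction, one parallel to some face of $P$. Generically such a critical direction is parallel to a single face $f$ (together with any face parallel to $f$); at the crossing, exactly the duples of these faces swap silhouette status, some joining and some leaving the shadow boundary.

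For $(\Rightarrow)$, assume $P$ is $k$-equiprojective and take a smooth arc on $S^2$ that crosses a single critical direction parallel to a face $f$ (and its parallels). Just before and just after the crossing, the shadow is a $k$-gon and depends continuously on $\mathbf{d}$, so the joining and the leaving silhouette edges must match up as parallel edges that replace one another on opposite sides of the shadow. I would use this matching to pair each leaving duple $(e,f)$ with a joining duple $(e',f')$, where $f'$ equals $f$ or is a face parallel to $f$ and critical at the same direction. The clockwise-outside convention then assigns these two duples opposite directions, so $(e,f)$ and $(e',f')$ compensate. Iterating over all critical directions produces a partition of every duple of $P$ into compensating pairs.

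For $(\Leftarrow)$, assume such a partition exists and rotate an arbitrary non-degenerate direction to any other non-degenerate direction along a great-circle arc. Between consecutive critical directions the silhouette is combinatorially constant, so the shadow has a constant number of edges. At each critical crossing parallel to a face $f$, every duple of $f$ changes its silhouette status. By the partition hypothesis, each such duple $(e,f)$ is paired with a compensating duple $(e',f')$ where $f'=f$ or is parallel to $f$, and the oppositeness of their directions forces exactly one duple of the pair to be on the silhouette on each side of the critical direction. The number of silhouette edges is therefore preserved through the crossing, so the shadow size is constant across all non-degenerate directions.

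The main obstacle is the orientation bookkeeping in both directions: verifying that the ``opposite directions'' clause in the definition of compensation corresponds exactly to the geometric fact that of a paired pair one duple joins and the other leaves the silhouette, rather than both joining or both leaving. This requires carefully relating the clockwise-traversal convention that orients a duple to the sign of $\mathbf{d}\cdot\mathbf{n}_f$ that distinguishes front-facing from back-facing faces. A secondary technical point is reducing non-generic critical directions — those parallel to several non-parallel faces at once, or producing collinear silhouette edges in the shadow — to the single-face case via perturbation, consistent with the ``except for directions parallel to faces'' clause in the definition of equiprojectivity.
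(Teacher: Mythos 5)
The paper itself does not prove Theorem~\ref{th:char}: it imports the statement from \cite{HL03} and offers only an informal motivating paragraph, which is essentially the same picture you start from (edges leaving and joining the shadow boundary as the direction crosses a plane parallel to a face). Judged as a reconstruction of the cited proof, your proposal has the right skeleton but a genuine gap in the forward direction. A single crossing of the great circle of directions parallel to $f$ (and to its parallel face $f'$) tells you only that the \emph{number} of duples leaving the silhouette equals the \emph{number} joining it; it does not give you that the leaving and joining edges ``match up as parallel edges that replace one another.'' Indeed they need not be individually parallel at all --- they are merely coplanar, all lying in the plane of $f$ or of $f'$. To extract an actual partition into compensating pairs you must let the crossing point run over the \emph{entire} great circle of directions parallel to $f$: each duple $(e,f)$ belongs to the leaving (resp.\ joining) chain for exactly a half of that great circle, determined by the direction vector of $(e,f)$, and imposing equality of counts at \emph{every} crossing forces, for each unit vector $u$ parallel to $f$, the number of duples of $f$ and $f'$ with direction $u$ to equal the number with direction $-u$. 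Only this global counting yields the matching into parallel, oppositely directed pairs; the single-crossing argument you describe cannot, and ``iterating over all critical directions'' as you state it visits each face's great circle only once.

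The backward direction is essentially sound, but it too rests on the step you defer: one must fix a convention charging each shadow edge to exactly one of the two duples of the corresponding silhouette edge (say, the duple whose face is back-facing), and then verify that the ``opposite direction'' clause in the definition of compensation guarantees that exactly one member of each compensating pair is charged in every non-degenerate direction. That verification is precisely the orientation bookkeeping you flag as the ``main obstacle'' without carrying it out. So as written the proposal is a plan rather than a proof: the second obstacle (orientations, and the perturbation to generic critical directions) is routine, but the first one is not mere bookkeeping --- it requires the great-circle counting argument above, which is the actual content of the theorem in \cite{HL03}.
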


By the above theorem, 
a zonohedron is equiprojective since all of its faces are self-compensating;
a prism is equiprojective since the corresponding edges in two opposite bases
compensate each other and the remaining faces are self-compensating;
whereas a pyramid is not equiprojective since the edges in the triangular faces
do not have any compensating pairs (as the triangles do not have parallel pairs).  

\section{Minimality}
\label{se:min}

Observe that both a zonohedron and a prism have at least one pair of parallel faces.
In fact, in the following lemma we prove that this is true for any equiprojective polyhedron.

\begin{lemma}
\label{le:two_parallel}
Let $P$ be an equiprojective polyhedron. Then $P$ has at least one pair of parallel faces.
\end{lemma}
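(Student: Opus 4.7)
The plan is to argue by contradiction, using the characterization in Theorem~\ref{th:char}. Suppose $P$ is equiprojective but no two faces of $P$ are parallel. I will show that this forces $P$ to be a zonohedron, which immediately contradicts the assumption.

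First I would fix any face $f$ of $P$ and any edge $e$ of $f$, and inspect the compensating partner of the duple $(e,f)$. By Theorem~\ref{th:char} such a partner $(e',f')$ exists and satisfies $e' \parallel e$ together with $f'$ equal or parallel to $f$. The hypothesis rules out $f'$ being a distinct face parallel to $f$, so $f' = f$. One should also observe that the other face $g$ containing $e$ cannot serve as a partner either: $f$ and $g$ share the edge $e$, and two distinct parallel faces of a convex polyhedron cannot share an edge, so $(e,g)$ is not even a parallel duple to $(e,f)$. Consequently the compensating partner of $(e,f)$ must lie inside $f$ itself, meaning there is some other edge $e''$ of $f$ parallel to $e$ with opposite clockwise direction.

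Applying this to every edge of every face shows that each face of $P$ is self-compensating, i.e., its edges split into parallel pairs. That is precisely the defining property of a zonogon, so every face of $P$ is a zonogon and $P$ is therefore a zonohedron. At this point I would invoke the structural property of zonohedra recalled in the Preliminaries: tracing the ``zone'' of any edge around $P$ shows that every face of a zonohedron has a distinct parallel face with matching parallel edges. This contradicts the standing assumption that $P$ has no pair of parallel faces, completing the proof.

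The only subtle point is the second step, where one must carefully rule out every alternative location for the compensating partner of $(e,f)$ under the no-parallel-faces assumption; once the compensation is forced to take place inside each face, the reduction to a zonohedron and the cited zonohedron property finish the argument with no further work.
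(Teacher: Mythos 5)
Your proposal is correct and follows essentially the same route as the paper's own proof: assume no parallel faces, use Theorem~\ref{th:char} to force every edge-face duple to be compensated within its own face, conclude that every face is self-compensating so $P$ is a zonohedron, and derive a contradiction from the fact that zonohedra have parallel face pairs. The extra remark ruling out the adjacent face $g$ as a partner is a harmless elaboration of the same argument.
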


\begin{proof}
By way of contradiction assume that $P$ has no parallel faces.
Since $P$ is equiprojective, by Theorem~\ref{th:char} all of its edge-sides are 
partitioned into compensating pairs.
By definition, an edge-side can be compensated by another one within 
the same face or in the parallel face.
So in $P$ each edge-side is compensated within the same face,
which implies that all faces of $P$ are self-compensating,
further implying that $P$ is a zonohedron. 
But the faces of a zonohedron are also in parallel pairs, a contradiction.
\end{proof}

\begin{theorem}
There is no 3- or 4-equiprojective polyhedron, 
and a triangular prism is the only 5-equiprojective polyhedron.
\end{theorem}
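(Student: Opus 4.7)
My plan is to prove the inequality $k\geq 5$ for any equiprojective polyhedron, and then to show that equality forces $P$ to be a triangular prism.

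By Lemma~\ref{le:two_parallel}, every equiprojective polyhedron $P$ has a pair of parallel faces $f_1\parallel f_2$; write $\hat{n}$ for their common normal direction. The key step is the inequality $k\geq |f_1|+2$. To prove it, I would pick a projection direction $d$ close to $\hat{n}$ but tilted by a small angle $\theta$ inside the plane of $f_1$, with the tilt direction chosen generically so that $d$ is not parallel to any face of $P$. Under this projection, $f_1$ and $f_2$ project to near-copies of themselves in the image plane, shifted relative to one another by a vector of length $h\sin\theta$, where $h$ is the distance between the two parallel planes. The shadow of $P$ is convex and contains the convex hull of these two projected polygons. Two generic translates of the same convex $|f_1|$-gon have a convex hull with exactly $|f_1|+2$ vertices---the $|f_1|$ vertices of the original, plus two ``bridge'' vertices where the boundary transitions between the translated copies---so $k\geq |f_1|+2\geq 5$. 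This rules out 3- and 4-equiprojective polyhedra.

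When $k=5$, the bound forces $|f_1|=|f_2|=3$, so $f_1$ and $f_2$ are parallel triangles. Since a triangle admits no pair of parallel edges for self-compensation, Theorem~\ref{th:char} forces each edge of $f_1$ to cross-compensate with a parallel, oppositely oriented edge of $f_2$; hence $f_1$ and $f_2$ are similar triangles with pairwise parallel sides. To show $f_1\cong f_2$, I would analyze the three lateral faces: if $f_1$ and $f_2$ had different sizes, each lateral face would be a proper trapezoid whose two slant edges are non-parallel, and the three lateral faces themselves (arranged around the axis at angles dictated by the triangular bases) are pairwise non-parallel, so their slant-edge duples would admit no compensating partner, contradicting Theorem~\ref{th:char}. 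Finally, any vertex of $P$ outside $f_1\cup f_2$ would project to a sixth vertex of the shadow in some generic direction, contradicting $k=5$. Hence $P$ has exactly the six vertices of $f_1\cup f_2$, arranged as two congruent parallel triangles with pairwise parallel sides, and the only convex polyhedron with this structure is a triangular prism.

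The main technical obstacle is the projection argument for $k\geq |f_1|+2$: justifying the ``$+2$'' in the vertex count of the convex hull of two translates requires a careful generic-position analysis, and one must also handle the subtler possibility that $f_1$ and $f_2$ might be parallel faces of different shapes (for instance, if one of them has self-compensating edges and more sides than the other). An auxiliary use of Theorem~\ref{th:char} to constrain the combinatorial compatibility of $f_1$ and $f_2$ would likely be needed to close this case in full generality.
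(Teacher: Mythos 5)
Your overall plan --- extract a parallel pair from Lemma~\ref{le:two_parallel}, project along a direction tilted slightly off their common normal, and get a ``$+2$'' from the transition between the two faces' contributions to the shadow --- is the same skeleton as the paper's proof. The gap is in your key inequality $k \ge |f_1|+2$, which you derive without ever using equiprojectivity, and which is false for general convex polyhedra with a pair of parallel faces. First, your vertex count assumes the projections of $f_1$ and $f_2$ are translates of one another, i.e.\ that the parallel faces are congruent with parallel corresponding edges; this fails in general (in the paper's own equitruncated tetrahedron the top triangle is parallel to a $9$-gon base), and you flag it as a loose end rather than closing it, but closing it is exactly where the content lies. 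Second, and more seriously, ``the shadow contains a convex $(|f_1|+2)$-gon'' does not imply the shadow has at least $|f_1|+2$ vertices, since a convex polygon can properly contain one with more vertices; you would need the edges of $f_1$ and $f_2$ to actually appear on the silhouette for your chosen direction. A concrete counterexample to your bound: take the convex hull of two congruent parallel triangles together with a larger concentric triangle in a plane between them. The top and bottom faces are exact translates, yet the near-vertical silhouette consists only of the three edges of the middle triangle, so that shadow is a $3$-gon, not a $5$-gon. (This solid is not equiprojective, but your derivation never invokes that hypothesis, so it cannot exclude it.) The paper avoids both problems by routing the count through Theorem~\ref{th:char}: each duple of $f$ must be compensated either within $f$ or by the unique parallel face $f'$, so either both faces are self-compensating (hence each has at least four edges, at least two of which survive into the shadow, giving $k\ge 6$) or they compensate each other (so exactly one of each parallel pair $e\in f$, $e'\in f'$ survives, contributing $\tfrac12(|f|+|f'|)\ge 3$ shadow edges), and in either case the two chains are separated by at least two further edges.

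Your $k=5$ endgame is, if anything, more detailed than the paper's terse version: ruling out self-compensating triangles, forcing cross-compensation and hence parallel corresponding sides, and killing unequal triangles via the uncompensated slant edges of the trapezoids are all legitimate uses of Theorem~\ref{th:char}. The final assertion that any vertex outside $f_1\cup f_2$ would force a sixth shadow vertex is stated rather than proved, but it can be repaired by the same compensation bookkeeping; the real defect to fix is the lower bound above.
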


\begin{proof}
Let $P$ be $k$-equiprojective for minimum possible value of $k$.
We first prove that $k\ge 5$.
By Lemma~\ref{le:two_parallel}, let $f$ and $f'$ be two parallel faces of $P$.
Consider a projection of $P$, and let $P'$ be the convex polygon of this 

projection.
The number of edges in $P'$ is $k$.
$P'$ contains edges from $f$ and $f'$ as two non-empty chains, one for $f$ and another for $f'$.
Let these two chains be $C$ and $C'$, respectively.
Since $P$ is convex, in $P'$, the two chains $C$ and $C'$ are disjoint and 
must be separated by at least two edges from faces other than $f$ and $f'$.
So the size of $P'$ is at least two plus the size of $C$ and $C'$.
Since $P$ is equiprojective, $f$ and $f'$ are self-compensating or 
compensate each other.
If $f$ and $f'$ are self compensating, then each of them has at least $2m$ edges, for some $m \ge 2$
with $m$ of those edges appearing in $p'$~\cite{HL08}, 
which gives the size of $P'$ at least six.
On the other hand, if $f$ and $f'$ compensate each other, then among two parallel edges
$e\in f$ and $e'\in f'$, exactly one appears in $P'$~\cite{HL08}.
Which gives that $|C|+|C'|=\frac{1}{2}|f|+|f'|$,
which is at least three when $f$ and $f'$ are triangles.
Therefore, $k\ge 5$.

Now, to keep the size of $P'$ as five, the only way to realize $P$ is to connect three edges of $f$ 
with the corresponding three parallel edges of $f'$ by three rectangles;
any other way to realize $P$ will have either more pair of parallel faces or 
a remaining face bigger than a rectangle.
In either case, $P$ will have a projection of size more than five.
\end{proof}

\section{The new polyhedra}
\label{se:new}
\subsection{New polyhedra from Johnson Solids}
In this section we give four new equiprojective polyhedra.
We derived them from four Johnson solids, which are a tetrahedron, a square pyramid, a triangular cupola, and a pentagonal rotunda,
and call them as \emph{equitruncated tetrahedron} (similarly, \emph{equitruncated pyramid}, and so on).
Our construction for equitruncated tetrahedron and equitruncated pyramid also work for the case when 
the tetrahedron and pyramid are not necessarily regular, and we will describe them for
an arbitrary tetrahedron and an an arbitrary pyramid.

\paragraph{Equitruncated tetrahedron}
We first see the construction of an equitruncated tetrahedron and then see the proof of its equiprojectivity.
Consider an arbitrary tetrahedron $P$.
We obtain an equitruncated tetrahedron $P'$ by cutting $P$.
Two types of cuts are applied to $P$: a \emph{top cut} and six \emph{side cuts}. 
(See Figure~\ref{fi:tetrahedra_my}(a).)
We consider an arbitrary face of $P$ as its \emph{base}
and the remaining vertex of $P$ as its \emph{apex}. 
The top cut is parallel to the base and it cuts the apex into a triangle.
The six side cuts can be divided into three similar pairs and are defined as follows.
Let the three edges adjacent to the apex of $P$ be $a,b$ and $c$.
Take three points $p_a, p_b,p_c$ on $a,b,c$ and imagine three parallel lines $l_a,l_b,l_c$ 
through $p_a, p_b,p_c$ such that the lines properly intersect the base
(e.g., they are perpendicular to the base, if $P$ is regular).
Take a pair of side cuts through $l_a$ and $l_b$ such that they are parallel to $c$.
Similarly, take another pair of side cuts through $l_a$ and $l_c$ that are parallel to $b$.
Take a third pair of side cuts through $l_b$ and $l_c$ that are parallel to to $a$.
This ends the construction of $P'$. See Figure~\ref{fi:tetrahedra_my}(a).

\begin{theorem}
\label{th:P1}
An equitruncated tetrahedron is 10-equiprojective.
\end{theorem}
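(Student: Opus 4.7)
My plan is to apply Theorem~\ref{th:char} by partitioning the edge-face duples of $P'$ into compensating pairs, and then to determine $k$ by a silhouette count for one generic projection direction.

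First I catalogue the faces of $P'$. Since each line $l_x$ meets the base, the six side cuts slice the base itself, so the base becomes a $9$-gon whose nine edges are the three remnants of $AB,BC,CA$ together with six cut edges (two near each original base vertex). The top cut creates a triangular top parallel to the base, with edges parallel to the three base remnants. Each original lateral face is trimmed by the top cut and by the two side cuts through the $l_x$-lines on its two apex edges, leaving a hexagon; the other four side cuts either miss the face or are subsumed by these two on it. Finally, each of the six side cuts contributes a triangular face with vertices $p_x$, the foot of $l_x$ on the base, and the point where the cut meets the base edge of the tetrahedron lying on the relevant lateral face; the third side of this triangle is the segment of $l_x$ between the base and $p_x$, shared with the paired cut through the same line.

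Second I verify compensation. The three edges of the top and the three $AB,BC,CA$-remnants of the base are corresponding edges of two parallel triangles, so they pair up in $3$ compensating pairs across the two faces. The six base cut-edges split into three opposite-direction parallel pairs within the base, since the two cuts parallel to a fixed edge $x\in\{a,b,c\}$ meet the base plane in parallel lines on opposite sides of the base; that gives $3$ more self-compensating pairs on the base. Each lateral hexagon is a zonogon (three pairs of parallel sides) and is thus self-compensating, contributing $9$ pairs across the three hexagons. Finally, for each $x\in\{a,b,c\}$ the two side cuts parallel to $x$ lie in parallel planes and the two corresponding cut triangles are translates of each other, so their three corresponding edges pair up with opposite traversal directions, giving $9$ more pairs. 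All $48$ duples are matched, and Theorem~\ref{th:char} gives that $P'$ is equiprojective.

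To compute $k$, fix a direction $\vec d$ not parallel to any face of $P'$ and classify each of the $11$ faces as \emph{front} ($\hat n\cdot\vec d>0$) or \emph{back}; a silhouette edge is one whose two adjacent faces differ in classification. In the regular-tetrahedron setup with apex $(0,0,1)$, base in $z=0$, lines $l_x$ vertical through points $p_x$ at height $1/4$, top cut at height $3/4$, and $\vec d=(1,1,1)/\sqrt3$, a direct enumeration of the $24$ edges of $P'$ gives exactly $10$ silhouette edges; since Theorem~\ref{th:char} guarantees the shadow size is constant over generic directions, $P'$ is $10$-equiprojective.

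The main obstacle is verifying the face structure claimed above, in particular that each side cut yields a triangular face (its cut plane does not reach the top cut face), that each lateral face remnant is exactly a hexagon (the two dominating side cuts on it correctly meet both of its apex edges and two of the three base edges, with the other candidate cuts subsumed), and that the two cut triangles parallel to a common $x$ are indeed translates of each other. These facts all reduce to checking the ordering of intersections of the candidate cut planes with the tetrahedron's edges; once that bookkeeping is done, the compensating partition and the silhouette count are routine enumerations.
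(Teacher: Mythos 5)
Your proposal is correct and follows essentially the same route as the paper's proof: you partition the edge-face duples exactly as the paper does (top triangle against the three old base edges, the six new base edges in parallel pairs within the base, the three lateral hexagons self-compensating, and each pair of parallel side cuts yielding a compensating pair of triangles), then invoke Theorem~\ref{th:char} and count a generic shadow. Your version is somewhat more explicit about the face inventory and replaces the paper's appeal to the figure with a concrete silhouette enumeration, but the underlying argument is the same.
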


\begin{proof}
Edges in the top triangle of $P'$ are compensated by the three (old) edges of the base.
Each pair of parallel cuts creates a pair of parallel triangles
which compensate each other.
Six (new) edges of the base are themselves in parallel pairs 
and thus compensate each other within the base. 
The remaining three hexagonal faces are self-compensating.
By Theorem~\ref{th:char}, $P'$ is equiprojective.
Finally, the non-degenerated orthogonal projection of $P'$ in Figure~\ref{fi:tetrahedra_my}(a) 
has size ten.
Therefore, an equitruncated tetrahedron is 10-equiprojective.
\end{proof}

\begin{figure}[htbp]
\begin{center}
\input{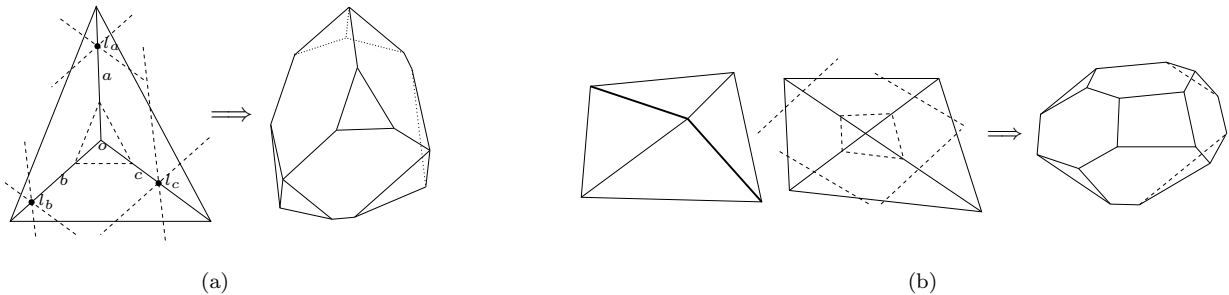}
\caption{(a) An equitruncated tetrahedron obtained from a tetrahedron.
(b) An equitruncated pyramid obtained from a pyramid.}
\label{fi:tetrahedra_my}
\end{center}
\end{figure}

\paragraph{Equitruncated pyramid}
An \emph{equitruncated pyramid} $P'$ is obtained by cutting an arbitrary quadrilateral pyramid $P$.
Again, two types of cuts are applied here: a \emph{top cut} and four \emph{side cuts}. 
(See Figure~\ref{fi:tetrahedra_my}(b).)
The top cut is parallel to the base and creates a quadrilateral at the top. 
For the side cuts, consider the four edges that are incident to the apex of $P$.
These four edges can be partitioned into two paths ${\cal P}_1$ and ${\cal P}_2$ 
where each path contains two edges from two different faces.
See Fig.~\ref{fi:tetrahedra_my}(b), where one of paths has been shown in bold.
Observe that the two edges of a path are coplanar. 
For each path, apply two side cuts at the two ends such that they are parallel 
to the plane of the other path.
This ends the construction of $P'$. 
See Figure~\ref{fi:tetrahedra_my}(b).

Proof of the equiprojectivity of $P'$ is similar to that of an equitruncated tetrahedron.

\begin{theorem}
\label{th:P2}
An equitruncated pyramid is 10-equiprojective.
\end{theorem}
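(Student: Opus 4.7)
The plan is to mirror the proof of Theorem~\ref{th:P1}: exhibit a partition of the edge-face duples of $P'$ into compensating pairs, invoke Theorem~\ref{th:char} for equiprojectivity, and then count the silhouette edges of a non-degenerate projection. By construction $P'$ has one top quadrilateral face (similar to the base), four triangular cut faces (each side cut separates exactly one base vertex from the other four pyramid vertices and therefore crosses exactly three pyramid edges), an octagonal modified base (each of the four side cuts truncates one corner of the original quadrilateral base), and four modified side faces, each originally a triangle now truncated by the top cut and by one side cut from each of the two paths.

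The compensation accounting proceeds in four parts. First, since the top cut is parallel to the base, the four edges of the top quadrilateral are parallel to the four remnants of the original base edges, and their duples in the top and base faces compensate pairwise. Second, the two side cuts belonging to a single path-pair lie in parallel planes and hence create two parallel triangular cut faces with three pairs of parallel edges, yielding three cross-face compensating duple pairs per path-pair. Third, each side cut adds one new edge to the base; the two new base edges coming from a single path-pair are parallel (they lie in parallel cutting planes) and sit on opposite sides of the base, so their base-face duples compensate within the base. Fourth, I claim each modified side face is a self-compensating hexagon.

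The main obstacle is this last claim. Let $f$ be an original side face, containing side edges $e_1 \in {\cal P}_1$ and $e_2 \in {\cal P}_2$. The side cut at the foot of $e_1$ lies in a plane parallel to ${\cal P}_2$'s plane; since that plane contains $e_2$ and $f$ also contains $e_2$, the intersection of the cut plane with $f$ is a line parallel to $e_2$. Symmetrically, the side cut at the foot of $e_2$ meets $f$ in a line parallel to $e_1$. Together with the top-cut edge, which is parallel to the base edge of $f$, this gives three pairs of parallel edges on opposite sides of the resulting hexagon, so it is self-compensating. Theorem~\ref{th:char} then yields equiprojectivity, and a non-degenerate projection as in Figure~\ref{fi:tetrahedra_my}(b) has exactly ten silhouette edges, establishing $k = 10$.
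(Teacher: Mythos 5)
Your proposal is correct and follows exactly the approach the paper intends: the paper gives no separate proof for this theorem, stating only that it is "similar to that of an equitruncated tetrahedron," and your four-part compensation accounting (top-versus-base edges, parallel triangle pairs from each path's two cuts, parallel new base edges, self-compensating hexagonal side faces) is precisely the pyramid analogue of the paper's proof of Theorem~\ref{th:P1}. In fact you supply details the paper omits, notably the verification that each truncated side face really is a hexagon with opposite edges parallel.
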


\paragraph{Equitruncated triangular cupola}
We obtain an \emph{equitruncated triangular cupola} $P'$ from a triangular cupola $P$.
Before we go to the actual construction, let us review some properties of $P$.
See Figure~\ref{fi:j3_my}(a) for two complementary views of $P$.
Beside the regularity of its faces,
we observe that the edges of $P'$ that are not an edge of the hexagonal base 
can be partitioned into three similar paths, where each path 
contains three edges including one from the top triangle.
Moreover, edges in each path 
are coplanar and their plane is parallel with one of the three triangles 
that are adjacent to the base. 
See Fig.~\ref{fi:j3_my}(a), where one such path is shown in bold.

\begin{figure*}[htbp]
\begin{center}
\input{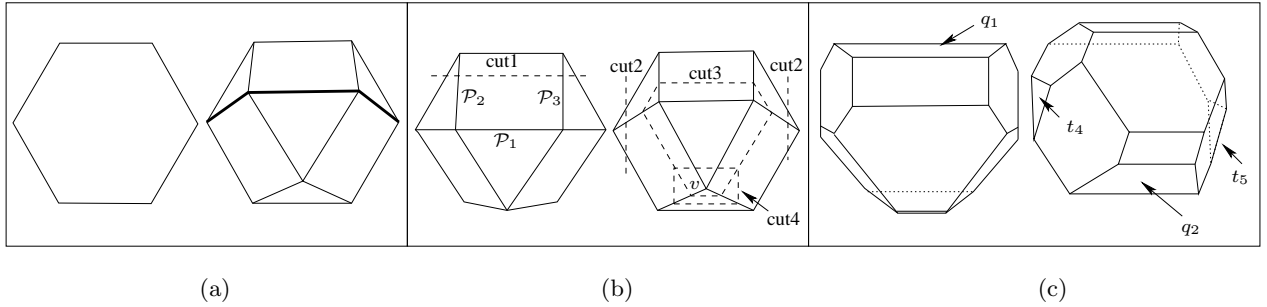}
\caption{(a) A triangular cupola $P$. A path is shown in bold.
(b) The four cuts applied to $P$. 
(c) The equitruncated triangular cupola obtained from $P$.}
\label{fi:j3_my}
\end{center}
\end{figure*}

To construct $P$ we use four types of cuts: \emph{cut1, cut2, cut3}, and \emph{cut4}.
See Figure~\ref{fi:j3_my}(b).
Let ${\cal P}_1, {\cal P}_2, {\cal P}_3$ be the three paths of $P$ (as mentioned above).
Let $t_1, t_2, t_3$ be three triangles adjacent to the base of $P$,
with $t_i$, for $i=1,2,3$, being parallel to the plane of ${\cal P}_i$. 
Cut1 is parallel to $t_1$, cuts two end edges of ${\cal P}_2$ and ${\cal P}_3$,
and creates a rhombus $q_1$.
Next, a parallel pair of cut2, which are perpendicular to the base and 
the plane of $t_1$, are applied to the two end edges of ${\cal P}_1$.
These two cuts create two parallel triangles $t_4$ and $t_5$.
Cut3, which is parallel to the base and the top triangle, 
is applied at the top and converts the top triangle into a hexagonal face.
Next, a cut4 is applied to the common vertex of ${\cal P}_2$ and ${\cal P}_3$
(we let this vertex to be $v$) such that it creates a rectangle. 
Observe that since the faces around $v$ are equilateral triangles and squares in alternate fashion
and since cut3 is parallel to the top triangle, such a rectangle is guaranteed by the cut4.
Cut4 also converts the triangle $t_1$ into a quadrilateral $q_2$ with two edges parallel.
See Fig.~\ref{fi:j3_my}(b,c).

\begin{theorem}
\label{th:P3}
An equitruncated triangular cupola is 11-equiprojective.
\end{theorem}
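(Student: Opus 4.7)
The plan is to follow the template used for Theorem~\ref{th:P1} and Theorem~\ref{th:P2}: first, invoke Theorem~\ref{th:char} by exhibiting a partition of the edge-face duples of $P'$ into compensating pairs, and then count the silhouette edges of a generic shadow of $P'$ (for instance, the view in Figure~\ref{fi:j3_my}(c)) to verify the constant is exactly $11$.

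For the compensation step I would process the faces according to which of the four cuts created or modified them. The rhombus $q_1$ produced by cut1 and the rectangle produced by cut4 are self-compensating, since their edges appear in parallel pairs by construction. The two parallel triangles $t_4$ and $t_5$ carved out by the parallel pair cut2 compensate each other duple-for-duple, because cut2 is a parallel pair and $t_4, t_5$ are congruent parallel translates. The new hexagonal top face created by cut3 is parallel to the (enlarged) hexagonal base, and since cut3 is parallel to both the original top triangle and the base, each edge-face duple of the top hexagon is parallel to exactly one edge-face duple of the base hexagon on the opposite side of $P'$, yielding another batch of compensating pairs. The three remaining slanted faces---the modified quadrilateral $q_2$ (from $t_1$), and the two squares $t_2, t_3$ after being truncated by cut2 and cut4---can be checked to be self-compensating using the explicit parallelism noted in the construction (the two parallel edges of $q_2$ cancel each other; the squares keep their two parallel-edge pairs; any new edges introduced along their boundary by a cut appear in pairs with their counterparts on the neighboring truncated face). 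Once every duple is paired, Theorem~\ref{th:char} gives equiprojectivity.

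To pin down $k=11$ I would pick a projection direction in general position (no face parallel to it) and count the outline in Figure~\ref{fi:j3_my}(c), verifying that the silhouette has exactly eleven sides. Because the partition above forces the shadow size to be constant across generic directions, this single count certifies that $P'$ is $11$-equiprojective.

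The main obstacle will be the bookkeeping for cut4 and its interaction with the other cuts near the vertex $v$: cut4 simultaneously alters $t_1$ into $q_2$, affects the two paths $\mathcal{P}_2$ and $\mathcal{P}_3$ meeting at $v$, and must produce a rectangle. The two facts I expect to use most delicately are (i) that $q_2$ genuinely has one pair of parallel edges (so that two of its four duples compensate within $q_2$ and the remaining two line up with freshly exposed edges on its neighbors), and (ii) that no edge is left orphaned after all four cuts have been combined. The regularity of the triangular cupola---with equilateral triangles and squares alternating around $v$, and with $\mathcal{P}_1,\mathcal{P}_2,\mathcal{P}_3$ lying in planes parallel to $t_1,t_2,t_3$---is exactly what makes these final parallelisms fall out automatically, so once those two checks are performed the rest of the proof is routine.
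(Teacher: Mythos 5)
Your overall template matches the paper's: exhibit a partition of the edge-face duples into compensating pairs, then count one generic shadow to get $k=11$. But your partition fails at exactly the spot you flag as delicate. You declare the rhombus $q_1$ self-compensating and then hope that the two non-parallel edges of the trapezoid $q_2$ ``line up with freshly exposed edges on its neighbors.'' That cannot work: two duples compensate only if they lie in the same face or in \emph{parallel} faces, and the neighbors of $q_2$ are adjacent to it, not parallel to it. The paper's pairing is forced to be different: since cut1 is parallel to $t_1$ and cut4 converts $t_1$ into $q_2$, the faces $q_1$ and $q_2$ are parallel; one parallel pair of edges of $q_1$ compensates within $q_1$, the single parallel pair of $q_2$ compensates within $q_2$, and the remaining two edges of $q_1$ are paired with the remaining two (mutually non-parallel) edges of $q_2$. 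In other words $q_1$ must surrender two of its duples to rescue $q_2$; if you spend all four of $q_1$'s duples internally, the two leg edges of the trapezoid $q_2$ are orphaned and Theorem~\ref{th:char} cannot be invoked.

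Two smaller discrepancies: you call $t_2,t_3$ squares and propose pairing the top hexagon with the base, whereas in the paper $t_1,t_2,t_3$ are the three \emph{triangles} adjacent to the base, and every face other than $t_4$, $t_5$, $q_1$, $q_2$ is self-compensating on its own (the top hexagonal face and the base each have their edges in parallel pairs), so no top-to-base pairing is used or needed. Your treatment of $t_4$ and $t_5$ as a compensating pair of parallel triangles, and the final count of eleven read off from Figure~\ref{fi:j3_my}, do agree with the paper.
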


\begin{proof}
In $P$, all faces, except the triangles $t_4$ and $t_5$ and the quadrilaterals $q_1$ and $q_2$,
are self-compensating.
Edges of $t_4$ are compensated by the edges of $t_5$.
In each of $q_1$ and $q_2$, two edges are parallel and thus compensate each other
within the same face.
The remaining two edges of $q_1$ are parallel and thus compensated by the remaining two edges of $q_2$.
Finally, in  Figure~\ref{fi:j3_my} the non-degenerated orthogonal projection of $P$ has size eleven.
Therefore, an equitruncated triangular cupola is 11-equiprojective.
\end{proof}

\paragraph{Equitruncated pentagonal rotunda}
In this section we get two different \emph{equitruncated pentagonal rotunda} $P_1$ and $P_2$,
each from a pentagonal rotunda $P$.
As before, let us first study some properties of $P$.
Figure~\ref{fi:j6_my}(a) shows two opposite views of $P$.
Its base is a regular decagon and its other edges are partitioned into five paths, 
each of length five and each containing one edge from the top pentagon.
Each path lie in a single plane. 
Let $e_1,\ldots,e_5$ be the circular ordering of the edges of the top pentagon.
We denote the paths as ${\cal P}_1,\ldots,{\cal P}_5$ assuming that ${\cal P}_i$
contains $e_i$.
We call two paths ${\cal P}_i$ and ${\cal P}_j$ {\it non-adjacent} if $e_i$ and $e_j$
are nonadjacent. See Figure~\ref{fi:j6_my}(b).

To construct the first equitruncated pentagonal rotunda $P_1$,
four types of cuts are used: {\it cut1, cut2, cut3} and {\it cut4}.
See Figure \ref{fi:j6_my}(b).
A cut4 is applied at each vertex of $P$ that is not incident to the base of $P$
and it is applied in such a way that it creates a rectangle.
Let $e_1$ and $e_2$ be two parallel edges of the base decagon.
Let the two paths that ends at the four end points of $e_1$ and $e_2$ be ${\cal P}_1$ and ${\cal P}_3$.
The two other faces adjacent to $e_1$ and $e_2$ are a pentagon and a triangle,
and let them be $p$ and $t$ respectively.
A pair of cut2 and cut3 are applied simultaneously at the ends ${\cal P}_1$ and ${\cal P}_3$.
The cut2 is parallel to $p$, cuts $t$, and creates a new quadrilateral $q_1$.
On the other hand, the cut3 is parallel to $t$, cuts $p$, and creates a new quadrilateral $q_2$.

In a similar way, we apply another pair of a cut2 and a cut3 at the end edges of ${\cal P}_2$ and $P_4$.
Finally, a pair of cut1, which are perpendicular to the base and the plane of ${\cal P}_5$,
are applied to the two end edges of ${\cal P}_5$ and crate two parallel triangles.

The second equitruncated pentagonal rotunda $P_2$ is obtained similar to $P_1$,
except by replacing the second pair of cut2 and cut3 by two parallel pairs of cut1 
at the ends of ${\cal P}_2$ and ${\cal P}_3$, which create two pairs of parallel triangles.

\begin{figure*}[htbp]
\begin{center}
\input{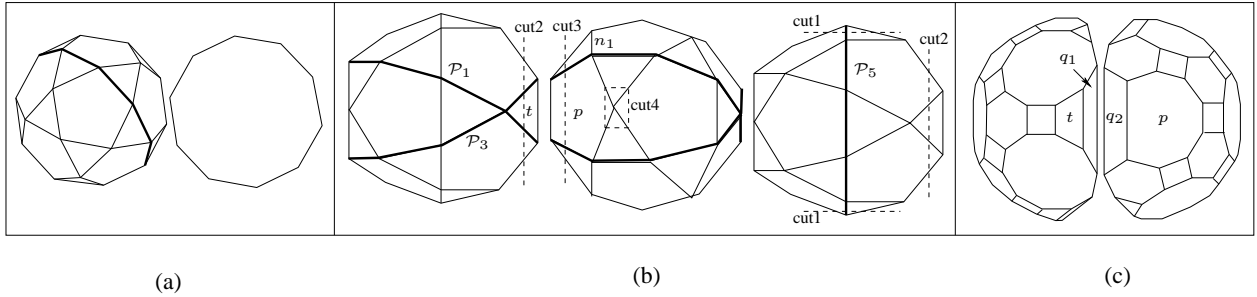}
\caption{Obtaining an equitruncated pentagonal rotunda $P_1$ from a pentagonal rotunda $P'$.
(a) Two opposite views of $P'$. (b) Cuts used in $P'$ to obtain $P_1$.
(c) Two complementary views of $P_1$.}
\label{fi:j6_my}
\end{center}
\end{figure*}

\begin{theorem}
The above two equitruncated pentagonal rotunda are $21$- and $23$-equiprojective respectively.
\end{theorem}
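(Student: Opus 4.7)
The plan is to handle $P_1$ and $P_2$ separately, and in each case to appeal to Theorem~\ref{th:char}: exhibit a partition of the edge-face duples into compensating pairs, then identify a generic projection whose shadow has the stated number of edges. The structure of the argument mirrors the proofs of Theorems~\ref{th:P1}--\ref{th:P3}; what is different is the sheer amount of bookkeeping, since the rotunda has many faces and many cuts are applied.

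For $P_1$, I would first classify all faces of the cut polyhedron. The decagonal base of $P$ survives unchanged and is self-compensating, as the regular decagon has its edges in parallel pairs. Each cut4, by the construction, produces a rectangle, and every rectangle is self-compensating. The parallel pair of cut1 applied at the ends of $\mathcal{P}_5$ yields two triangles whose edges are pairwise parallel and oppositely oriented (their defining cut planes are parallel), so they compensate each other exactly as in the equitruncated triangular cupola. The remaining faces are the former triangles and pentagons of $P$, each now modified by one or more cuts; I would verify, face by face, that each modified pentagon and each modified triangle either is self-compensating or has a specific compensating sibling on the opposite side of the polyhedron.

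The crucial step is to show that $q_1$ and $q_2$ (at $\mathcal{P}_1,\mathcal{P}_3$) compensate each other, and that the analogous quadrilaterals at $\mathcal{P}_2,\mathcal{P}_4$ do likewise. This reduces to checking that cut2 and cut3, applied simultaneously at the two relevant paths and each being parallel to the face sliced by its partner, create $q_1$ parallel to $q_2$ with corresponding edges parallel and oppositely directed; exactly two edges of each quadrilateral are parallel and compensate within the face (coming from a rectangle produced by cut4), while the other two edges pair across from $q_1$ to $q_2$. For $P_2$, the argument is essentially the same, with the pair of cut2/cut3 at $\mathcal{P}_2,\mathcal{P}_3$ replaced by two further parallel pairs of cut1; these create two additional pairs of parallel triangles that compensate in the same way as at $\mathcal{P}_5$, and the pentagons and triangles that were modified in $P_1$ stay intact (and self-compensating) in $P_2$.

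To pin down the numbers $21$ and $23$, I would fix a projection direction not parallel to any face of $P_1$ (respectively $P_2$) and read off the outline from a picture such as Figure~\ref{fi:j6_my}(c), counting the edges that survive on the silhouette. The difference between the two counts is explained by the substitution described above: replacing a cut2/cut3 pair by a cut1 pair introduces two extra silhouette edges at each affected path, accounting for the gap of two between $21$ and $23$. The main obstacle is not any single deep observation but the bookkeeping: with so many cuts meeting along shared edges, one must be careful that every edge-face duple on a truncated or newly-created face gets the compensating partner it is assigned, and that no duple is silently counted twice where a cut4 rectangle meets a cut2- or cut3-modified face.
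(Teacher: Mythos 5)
Your overall strategy (partition the duples into compensating pairs via Theorem~\ref{th:char}, then count a generic shadow) is the paper's strategy, but your partition breaks down at exactly the step you call crucial. You pair $q_1$ with $q_2$, claiming the simultaneous cut2/cut3 makes them parallel with oppositely directed edges. They are not parallel: by construction cut2 is parallel to the pentagon $p$ and slices the triangle $t$, so $q_1$ lies in a plane parallel to $p$; cut3 is parallel to $t$ and slices $p$, so $q_2$ lies in a plane parallel to $t$. Since $p$ and $t$ meet the decagonal base at different dihedral angles (a pentagon and a triangle of the rotunda are never parallel --- if they were, no cuts would be needed at all), $q_1$ and $q_2$ cannot compensate each other. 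The correct pairing, and the entire point of introducing cut2 and cut3, is $q_1$ with $p$ and $q_2$ with $t$ (and $q_1'$ with $p'$, $q_2'$ with $t'$): two edges of each of $p,p',t,t'$ are compensated by two edges of the corresponding quadrilateral, and the remaining edges of all eight faces are compensated within their own face. Under your pairing the leftover duples of $p$ and $t$ --- which, having $5$ and $3$ edges respectively before cutting, can never be self-compensating --- are left without partners, so the hypothesis of Theorem~\ref{th:char} is not verified.

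A related slip occurs for $P_2$: you assert that the pentagons and triangles no longer reached by a cut2/cut3 pair ``stay intact (and self-compensating),'' but an odd-sided face has no pairing of its edges into parallel opposite pairs, so this cannot be the mechanism; in $P_2$ those faces are handled by the additional cut1 pairs, which produce parallel triangles compensating each other. Finally, the projection counts should not be left to ``reading off a picture'': the paper fixes a concrete direction ($d+\varepsilon$ with $d$ normal to the base, and $d+\varepsilon+\varepsilon'+\varepsilon''$ for $P_2$) and tallies $6+6+2+7=21$ and $6+6+11=23$; note the difference of two is not ``two extra silhouette edges per affected path'' but a redistribution in which the base contributes four more edges while the side faces contribute two fewer.
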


\begin{proof}
Let the two quadrilaterals created by the second pair of cut2 and cut3 be $q_1'$ and $q_2'$,
and let the second pair of pentagon and triangles parallel to them be $p'$ and $t'$ respectively.
For $P_1$, all faces except $p,t,p',t',q_1,q_2,q_1'$ and $q_2'$ are self-compensating.
For each of $p,p',t$ and $t'$, two edges are compensated by two edges of 
$q_1,q_1',q_2$ and $q_2'$, respectively.
All other edges of $p,p',t,t',q_1,q_1',q_2$ and $q_2'$ are compensated within the same face.
Proof for $P_2$ is similar.

For the equiprojectivity of $P_1$ to be $21$, consider the direction $d+\varepsilon$,
where $d$ is perpendicular to the base and does not see the base
and $\varepsilon$ is towards the outward normal of $t_1$.
For $\varepsilon$ small enough, the projection boundary of $P_1$ from 
$d+\varepsilon$ consists of six edges from $q_1,q_2$, another six edges from $q_1',q_2'$,
two edges from $t_2$ and seven edges from the base.
This makes a total of twenty one edges in the boundary of a non-degenerated projection of $P_1$.

For $P_2$, two pairs of parallel triangles replaces $q_1',q_2'$ of $P_1$.
As a whole, $P_2$ has three parallel pairs of triangles adjacent to the base.
So, for a projection of $P_2$ from a direction $d+\varepsilon+\varepsilon'+\varepsilon''$,
where $\varepsilon,\varepsilon',\varepsilon''$ are perpendicular to the
three planes of the triangles, respectively, 
the projection boundary consists of six edges from $q_1$ and $q_2$, 
six edges from the triangles, and eleven edges from the base.
This makes a total of twenty three edges in the boundary of a non-degenerated projection of $P_2$.
\end{proof}

\subsection{Polyhedra from zonohedra}

In this section we derive three new equiprojective polyhedra from
three zonohedra, \emph{rhombic dodecahedron} (henceforth called \emph{RD}),
\emph{truncated octahedron}, and \emph{truncated cuboctahedron (TC)}.
These polyhedra are shown in Figure \ref{fi:as}.
The basic idea is to cut each of them into two half and join them again in a slightly different way.

\begin{figure}[htbp]
\begin{center}
\input{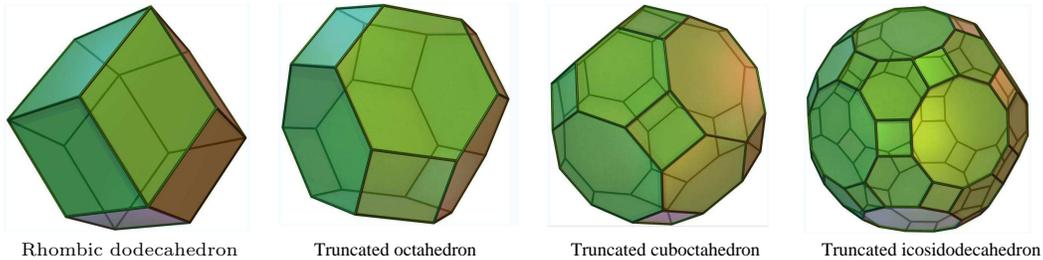}
\caption{Truncated octahedron, truncated cuboctahedron, truncated icosidodecahedron,
and rhombic dodecahedron (these figures are taken from Wikipedia: \texttt{http://www.wikipedia.com}).}
\label{fi:as}
\end{center}
\end{figure}

For an RD, we apply a cut through the diagonals of two opposite rhombus
and get two equal halves, which we call \emph{HRD}.
Observe that an HRD is equiprojective.
The new face that an HRD gets is called its \emph{base}.
We will join two HRD by their bases, but rotating one of them such that 
two triangular face adjacent to one HRD base become adjacent to rhombic 
face adjacent to the other HRD base.
We can not do it right away, because at this moment, the base of an HRD is not regular.
To make the base regular we apply an \emph{adjusting cut} which is parallel to bases
as shown in Fig.~\ref{fi:newj26}(a).
Observe that such a cut always exists, since the angles of the base of an HRD are equal.
We now join two such HRD as mentioned above and get a new equiprojective polyhedron,
which we call \emph{equitruncated rhombic dodecahedron}.
It is straight forward to see that the resulting polyhedron is convex and is 10-equiprojective.
See fig.~\ref{fi:as}(a).

Similarly, for a TO, we apply a cut through the diagonals of four squares 
and get two equal halves, which we call \emph{HTO}.
Once again, an HTO is equiprojective.
We join two HTOs by their bases after applying adjusting cuts 
and after rotating one of them such that triangular faces adjacent to the base of one HTO 
become adjacent to the hexagonal faces adjacent to the base of the other HTO,
and we get an equiprojective polyhedron, which we call \emph{equitruncated octahedron} and is 12-equiprojective.
See Fig.~\ref{fi:newj26}(b).

From a TC we can get three different equiprojective polyhedra, which we will call
as \emph{equitruncated cuboctahedron-I,II,III} respectively.
To achieve an equitruncated cuboctahedron-I, we apply a cut to TC that is parallel to an octagon 
and goes through an edge of four squares.
See Fig.~\ref{fi:newj26}.
This cut creates two polyhedra of unequal size.
The smaller one we call to be \emph{HTC}.
We take two copies of HTC and join them by their bases
after applying adjusting cuts and rotating one of them as before.
An equitruncated cuboctahedron-I is 13-equiprojective. 
Fig.~\ref{fi:newj26}(c).

\begin{figure}[htbp]
\begin{center}
\input{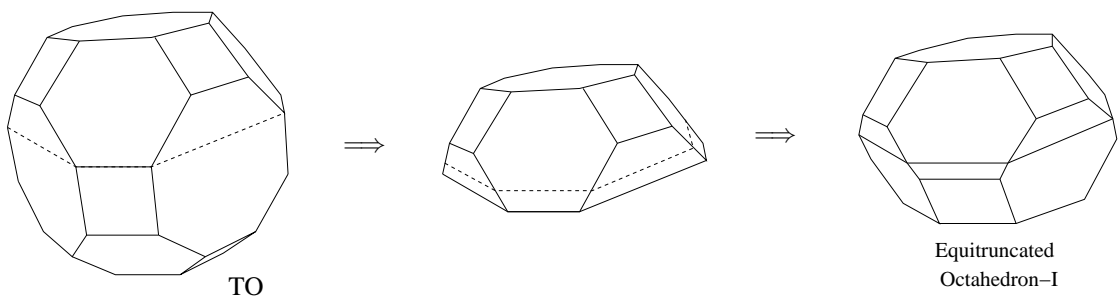}
\caption{Equiprojective polyhedra from an RD, a TC and a TO.}
\label{fi:newto}
\end{center}
\end{figure}

To achieve equitruncated cuboctahedron-II and III, first observe that among eight squares of a TO,
six are parallel to a direction.
We would like to apply a cut along the diagonals of six squares that are parallel to one direction.
However, such a cut does not exist.
To make such a cut possible, we will apply an adjusting cut parallel to each octagonal face 
as shown in Fig.~\ref{fi:new_4_6_8_ritated_2}.
The squares now turn to rectangles.
Moreover, we apply this adjusting cut such that the length of the diagonal of a rectangles becomes equal
to the length of an edge of the (enlarged) octagonal face.
Then we apply the cut through the diagonals of the six rectangles, 
which gives two equal halves of the TO, called \emph{HTO}.
Observe that the base of an HTO is a regular 12-gon.
We will rotate one HTO and join it to the other one by the bases such that 
the octagonal faces adjacent to one HTO base become adjacent to the similar octagonal faces of the other HTO.
This gives equitruncated cuboctahedron-II, which is 16-equiprojective.
To achieve  equitruncated cuboctahedron-III, we rotate one HTO and join to the other one 
such that octagonal and hexagonal faces that are adjacent to one HTO base
become adjacent to the adjacent triangles of the other HTO base.
This is a 17-equiprojective polyhedron.
See Fig.~\ref{fi:new_4_6_8_ritated_2}.

\begin{theorem}
An equitruncated octahedron, equitruncated cuboctahedron, and equitruncated cuboctahedron-I,II and III
are 10-, 12-, 13-, 16-, and 17-equiprojective, respectively.
\end{theorem}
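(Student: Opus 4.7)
The plan is to verify, for each of the five polyhedra, two things: that it is equiprojective, by invoking Theorem~\ref{th:char} (the compensating-pair characterization), and that the size of a non-degenerate shadow equals the claimed value of $k$. The construction pattern is uniform across all cases: a zonohedron $Z$ (rhombic dodecahedron, truncated octahedron, or truncated cuboctahedron) is sliced by a symmetric plane, possibly after a prism-like adjusting cut, producing a half $H$ whose cross-section is a regular polygon; two copies of $H$ are then reglued along the cross-section via a rotation that deliberately mismatches the original face-to-face correspondence of $Z$.

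For equiprojectivity, I would argue uniformly as follows. In $Z$ every face is self-compensating, so all edge-face duples are already paired within their own faces. After the cut, the faces that are sliced split into two pieces lying on opposite halves, and the new base polygon is bounded by the cut edges. Within each half, the whole surviving faces remain self-compensating, and the boundary of the base consists of parallel pairs of edges (since the cutting plane is a plane of symmetry of $Z$), so the base is also self-compensating. The adjusting cut only inserts a short prismatic belt parallel to the base, whose new rectangular faces are self-compensating and whose new edges come in parallel pairs. When two such halves are glued back together, the two base faces disappear into the interior, and the only duples needing new partners are those on edges along the seam; the rotation is chosen exactly so that each such duple on one half finds a parallel duple with opposite orientation on the other half (because the rotated faces adjacent to the seam become parallel to the faces adjacent to the seam on the other half). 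Applying Theorem~\ref{th:char} then yields equiprojectivity in each of the five cases.

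For the projection size, I would choose, for each polyhedron, a generic direction $d$ obtained by perturbing a natural axis so that no face is parallel to $d$, and then enumerate the silhouette edges contributed by the upper half, the lower half, the prismatic belt (if present), and the seam. For the equitruncated rhombic dodecahedron this yields $10$; for the equitruncated octahedron, $12$; for the three equitruncated cuboctahedra-I, II, III, it yields $13$, $16$, and $17$ respectively. The counts for variants II and III should be read off by observing that each adjusting cut adds two edges to every silhouette that previously crossed a square face, and that variant III's asymmetric identification of octagon/hexagon with triangles contributes one extra silhouette edge relative to variant II.

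The main obstacle is the case analysis for the three equitruncated cuboctahedra: the three constructions use different cuts and different gluing rotations, so I will have to verify separately in each case that the seam duples really do find compensating partners under the chosen rotation (variant II identifies octagons with octagons, variant III identifies octagons and hexagons with the triangles adjacent to the other half's base), and I will have to recount the silhouette carefully to distinguish the values $13$, $16$, and $17$. The compensation check is essentially combinatorial once parallelism of the relevant face classes is established from the symmetries of the truncated cuboctahedron, but the silhouette counts require a concrete choice of perturbed direction in each case.
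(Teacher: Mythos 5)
Your overall strategy (invoke Theorem~\ref{th:char} for equiprojectivity, then count a generic silhouette) is the same one the paper uses; in fact the paper gives no formal proof of this theorem at all, only the assertions embedded in the construction text. However, your key step for equiprojectivity of the glued solid is wrong as stated. You claim that the duples on seam edges ``need new partners'' and that the rotation supplies them ``because the rotated faces adjacent to the seam become parallel to the faces adjacent to the seam on the other half.'' This is backwards: the rotation is chosen precisely so that a face adjacent to the seam on one half meets a face of a \emph{different} type (triangle against rhombus, triangle against hexagon, etc.) on the other half, and these are not parallel, so no compensation across the seam is available or needed. The correct mechanism, which is the one the paper implicitly relies on (compare the bi-prism section: ``in the joining of two prisms we only lose two self-compensating faces''), is that each half is already equiprojective with a self-compensating base; gluing merely deletes the two self-compensating base faces, and every remaining duple, including those on seam edges, keeps the compensating partner it already had \emph{within its own half}. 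The rotation plays no role in the compensation argument; it only guarantees convexity and prevents coplanar faces from merging back into the original zonohedron.

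A second gap is your claim that ``within each half, the whole surviving faces remain self-compensating'' together with the assertion that only seam duples need new partners. The faces that are sliced by the cutting plane (e.g.\ the rhombi of the RD that become triangles, the squares of the TC that become triangles or are halved along diagonals) are \emph{not} self-compensating after the cut, and their non-seam duples lose the within-face partners they had in the zonohedron. These duples must be re-paired with duples of the parallel sliced face in the \emph{same} half (the cut is through a symmetry plane, so sliced faces come in parallel, oppositely oriented pairs); your accounting never pairs them. Establishing that each half (HRD, HTO, HTC, with or without the adjusting cut) is itself equiprojective is exactly the content of the paper's unproved ``Observe that an HRD is equiprojective,'' and it is the step your proposal would have to carry out face class by face class; as written, your partition of the duples does not exist.
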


It is possible to obtain equiprojective polyhedra from Truncated icosidodecahedron (see Fig.~\ref{fi:as})
in a way similar to obtaining truncated octahedron-II and III, but we find that too complicated
and too cumbersome.
We leave that as a future work.

\begin{figure}[htbp]
\begin{center}
\input{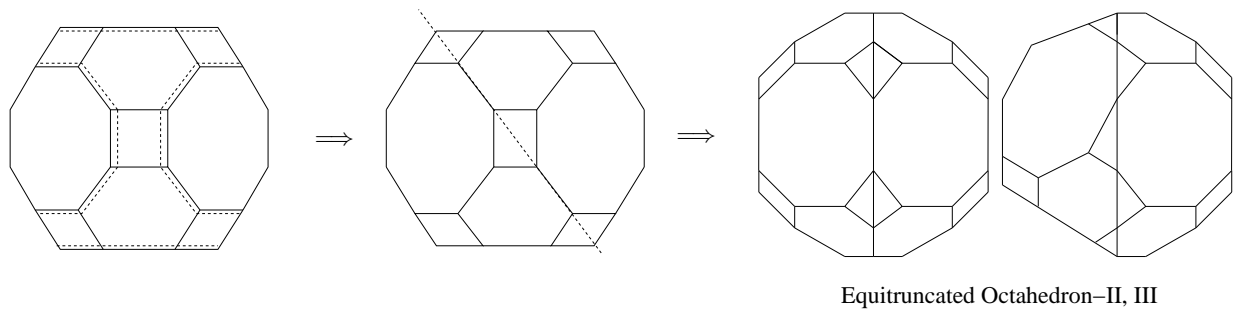}
\caption{Equiprojective polyhedra TO.}
\label{fi:new_4_6_8_ritated_2}
\end{center}
\end{figure}

\section{Joining prisms}
Our construction here is the generalization of the \emph{26th Johnson solid},
which is also called as \emph{gyrobifastigium}.
Gyrobifastigium is the join of two identical triangular prisms 
and is shown in Figure \ref{fi:newj26}(a).
Note that gyrobifastigium is 6-equiprojective.
The construction of gyrobifastigium can be generalized by generalizing the two prisms.
These two prisms can be $k_1$- and $k_2$-gonal, for arbitrary value of $k_1, k_2$.
The only criteria required is that the two faces by which the two prisms are joined  
should have solid angles at their edges such that after the joining 
the resulting polyhedron is convex. 
We call this generalized construction \emph{equiprojective bi-prism}.
Our construction is shown in Figure \ref{fi:newj26}(b), 
where the two prisms joined are triangular and 4-gonal respectively.

\begin{figure}[htbp]
\begin{center}
\input{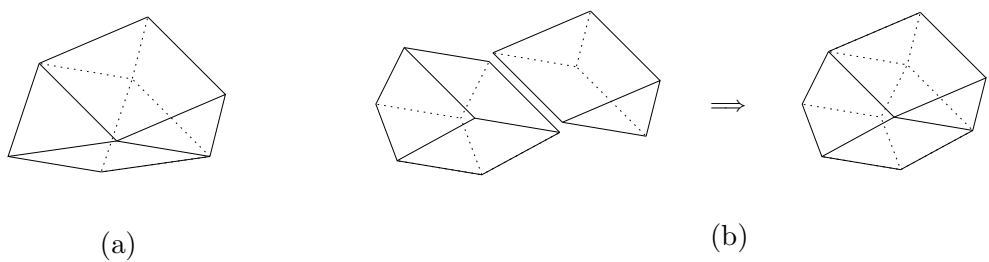}
\caption{Modified 26th Johnson solid by joining arbitrary prisms.}
\label{fi:newj26}
\end{center}
\end{figure}

The justification of why an equiprojective bi-prism is equiprojective is easy to follow.
Any prism is equiprojective and in the joining of two prisms 
we only lose two self-compensating faces.
Moreover, observe that an equiprojective bi-prism is $(k_1+k_2)$-equiprojective,
since the equiprojectivity of a gyrobifastigium is six and 
in an equiprojective bi-prism increasing the total size of the bases
by $k$ increases its equiprojectivity by $k$ too.

\begin{theorem}
Equiprojective bi-prisms are $(k_1+k_2)$-equiprojective, where the
two joining prisms are $k_1$- and $k_2$-gonal respectively.
\end{theorem}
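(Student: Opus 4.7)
The plan is to prove the theorem in two steps: first, establish that the bi-prism is equiprojective via Theorem~\ref{th:char}; second, compute the equiprojectivity value as $k_1 + k_2$.

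For the first step, I will exhibit a partition of all edge-face duples of the bi-prism into compensating pairs. The two parallel $k_1$-gonal bases of the first prism compensate each other edge-by-edge, exactly as in a standalone prism; likewise for the two parallel $k_2$-gonal bases of the second prism. Each of the $k_1 - 1$ surviving lateral rectangles of the first prism (and similarly each of the $k_2 - 1$ of the second) is self-compensating via its pair of parallel opposite edges. The only subtle point is the four edges of the now-deleted common rectangle, which persist as edges of the bi-prism, each shared between one face of the first prism and one face of the second. Their edge-face duples in the deleted rectangle are gone, while their duples in the two adjacent surviving faces (each a self-compensating rectangle, or a base already participating in the base-pair compensation) remain correctly paired. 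By Theorem~\ref{th:char}, the bi-prism is therefore equiprojective.

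For the second step, I will argue by induction on $k_1 + k_2$, starting from the base case of the gyrobifastigium ($k_1 = k_2 = 3$), already given to be $6$-equiprojective. For the inductive step, replace the $k_1$-gonal prism by a $(k_1+1)$-gonal prism while keeping the common glued rectangular face of the same dimensions. This adds exactly one new lateral rectangular face to the bi-prism. In a generic projection direction, this face is either entirely visible or entirely hidden, and its presence alters the silhouette by exactly one edge: a single new lateral edge joining the new face to its oppositely-classified neighbor enters the silhouette, while the rest of the outline is essentially unchanged. Hence the equiprojectivity increments by one each time $k_1 + k_2$ increases by one, yielding the claimed value.

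The main obstacle is the inductive step's silhouette counting: one must verify that adding one lateral face increases the silhouette size by exactly one, rather than shifting or collapsing chains of silhouette edges along the way. This is handled by restricting to a generic projection direction, under which no face is edge-on and no adjacent silhouette edges are collinear; under this genericity assumption, the combinatorics of the silhouette cleanly increment, completing the induction.
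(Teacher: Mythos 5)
Your proposal matches the paper's own argument: the paper likewise observes that gluing the two prisms only removes two self-compensating faces, so the remaining edge-face duples of the two prisms stay partitioned into compensating pairs as in Theorem~\ref{th:char}, and it likewise obtains the count $k_1+k_2$ by starting from the $6$-equiprojective gyrobifastigium and noting that each unit increase in the total base size increases the shadow by exactly one edge. Your write-up is somewhat more detailed than the paper's (in particular on the fate of the four glue-rectangle edges and on the genericity of the projection direction), but it is essentially the same proof.
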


\section{Conclusion}
We have implemented the above new equiprojective polyhedra in VRML.
The codes are available in~\cite{code10}, which can be downloaded for free
and the corresponding polyhedra can be viewed by using any VRML viewer.

In this paper we discovered some new equiprojective polyhedra.
We believe that the whole class of equiprojective polyhedra
is very rich, and so the actual open problem is still open: 
construct all equiprojective polyhedra.
We also believe that our technique of cutting and gluing existing polyhedra
may help discovering such an algorithm.
Any generalized algorithm for constructing even a subclass of equiprojective polyhedra
would be interesting and challenging.

\bibliographystyle{abbrv}
\bibliography{geombi}

\end{document}